\newtheorem{lemma}{Lemma}
\definecolor{pinocol}{rgb}{.1,0.8,0.1}
\definecolor{pinocolbg}{rgb}{.1,0.8,0.2}
\titlespacing*{\section}{0pt}{12pt}{6pt} 
\begin{document}


\setlength{\bibsep}{6pt} 
	
\title{33 Gbit/s source-device-independent quantum random number generator based on heterodyne detection with real-time FPGA-integrated extraction}

\date{\today}

\author{Marius Cizauskas}
\email{email: marius.cizauskas@tu-dortmund.de}
\affiliation{Experimentelle Physik 2, Technische Universit\"at Dortmund, 44221 Dortmund, Germany}
\affiliation{Department of Physics and Astronomy, University of Sheffield, Sheffield, UK}

\author{Hamid Tebyanian}
\affiliation{School of Physical and Chemical Sciences, Queen Mary University of London, London, UK}

\author{A. Mark Fox}
\affiliation{Department of Physics and Astronomy, University of Sheffield, Sheffield, UK}

\author{Manfred Bayer}
\affiliation{Experimentelle Physik 2, Technische Universit\"at Dortmund, 44221 Dortmund, Germany}

\author{Marc Assmann}
\affiliation{Experimentelle Physik 2, Technische Universit\"at Dortmund, 44221 Dortmund, Germany}

\author{Alex Greilich}
\email{email: alex.greilich@tu-dortmund.de}
\affiliation{Experimentelle Physik 2, Technische Universit\"at Dortmund, 44221 Dortmund, Germany}

\begin{abstract}
We present a high-speed continuous-variable quantum random number generator (QRNG) based on heterodyne detection of vacuum fluctuations. The scheme follows a source-device-independent (SDI) security model in which the entropy originates from quantum measurement uncertainty and no model of the source is required; security depends only on the trusted measurement device and the calibrated discretization, and thus remains valid even under adversarial state preparation. The optical field is split by a 90$^\circ$ optical hybrid and measured by two balanced photodiodes to obtain both quadratures of the vacuum state simultaneously. The analog outputs are digitized using a dual-channel 12-bit analog-to-digital converter operating at a sampling rate of 3.2 GS/s per channel, and processed in real time by an FPGA implementing Toeplitz hashing for randomness extraction. The quantum-to-classical noise ratio was verified through calibrated power spectral density measurements and cross-checked in the time domain, confirming vacuum-noise dominance within the 1.6 GHz detection bandwidth. After extraction, the system achieves a sustained generation rate of $R_{\rm net}= 33.92~\mathrm{Gbit/s}$ of uniformly distributed random bits, which pass all NIST and Dieharder statistical tests. The demonstrated platform provides a compact, FPGA-based realization of a practical heterodyne continuous-variable source-independent QRNG suitable for high-rate quantum communication and secure key distribution systems.
\end{abstract}

\maketitle

\section{Introduction}
Random numbers serve as fundamental building blocks across numerous fields, from Monte Carlo simulations in scientific research~\cite{Martin_2021} to quantum key distribution based cryptographic protocols~\cite{Alkhazragi_2023} of which the purpose is to provide unbreakable secure encrypted communication~\cite{Baar_2023}. In cryptographic applications, the quality and unpredictability of random numbers directly determine the security of encryption schemes, since they are used in one-time pad encryption, where the key length must match the message length and cannot be reused~\cite{Alkhazragi_2023}. While classical random number generators based on deterministic algorithms can satisfy many computational needs, they fundamentally cannot provide the true unpredictability required for security-critical applications, as their outputs are ultimately predictable given sufficient computational resources and knowledge of the algorithm's internal state~\cite{Garipcan_2021}.

Quantum mechanics offers a fundamental solution to this limitation through the intrinsic randomness present in quantum measurements. QRNGs exploit the fundamental uncertainty principle of quantum mechanics to produce truly unpredictable random numbers, providing a level of security that classical systems cannot achieve. The measurement of quantum observables yields outcomes that are fundamentally non-deterministic, even with complete knowledge of the quantum state prior to measurement. There are several established QRNG methods, such as measuring single photons going through a 50:50 beamsplitter~\cite{Jennewein_1999}, measurements of laser phase fluctuations~\cite{Yang_2016}, amplified spontaneous emission in fiber amplifiers~\cite{Williams_2010}, vacuum fluctuations~\cite{Bai_2021, Zheng_2018} and etc. QRNGs are further divided into two main categories. There are discrete variable (DV) QRNGs, where the random numbers are formed by quantum measurements that yield binary results, such as single-photon measurements, and there are continuous variable (CV) QRNGs, where continuous quantum mechanical properties are measured.

In recent times, QKD systems have been improving and are operating in the GHz range~\cite{Grnenfelder_2020, Takesue_2007}. With the increasing QKD operating frequency, higher bandwidth QRNGs are needed to match the data transmission rates. There are already implementations showing bandwidths of more than a hundred Gbits/s~\cite{Bruynsteen_2023, Ng_2023}. However, these systems are not fully secure. The sources or the measurement devices can be tampered, which reduces the quality of QRNG, allowing for potential attacks~\cite{Thewes_2019}. There are device-independent (DI) QRNGs, which assume that both source and measurement devices cannot be trusted and apply self-testing algorithms to ensure that the QRNG cannot be tampered with. However, due to how complicated DI schemes are, the achievable bandwidths range is only up to kbits/s~\cite{Liu_2019}. A good trade-off is a source device-independent (SDI) QRNG, where only the measurement device is assumed to be completely trusted. There are already implementations of SDI QRNGs with bandwidth rates in the 10-30 Gbits/s range~\cite{Avesani_2018, Shrivastava_2025, Guo_2025}. However, in those cases, the processing is performed offline, making the effective bandwidth significantly lower. Therefore, higher bandwidths can be achieved by performing the processing fully on FPGA.

In this paper, we present a SDI QRNG based on heterodyne measurements of vacuum fluctuations, with acquisition, extraction and data transfer done fully on the FPGA. By using an FPGA together with a dual-channel analog-to-digital converter (ADC12DJ5200RFEVM) operating at a sampling rate of $f_s = 3.2$ GHz per channel (up to 5.2 GHz maximum), both heterodyne detection channels are digitized in parallel to generate random numbers. A net extracted throughput of $R_{\rm net}= 33.92~\mathrm{Gbit/s}$ is achieved on this set up. In the subsequent sections we cover the experimental setup, including the FPGA implementation, the results of the QRNG and its statistical tests.
\begin{figure*}
\centering
\includegraphics[width=17.5cm]{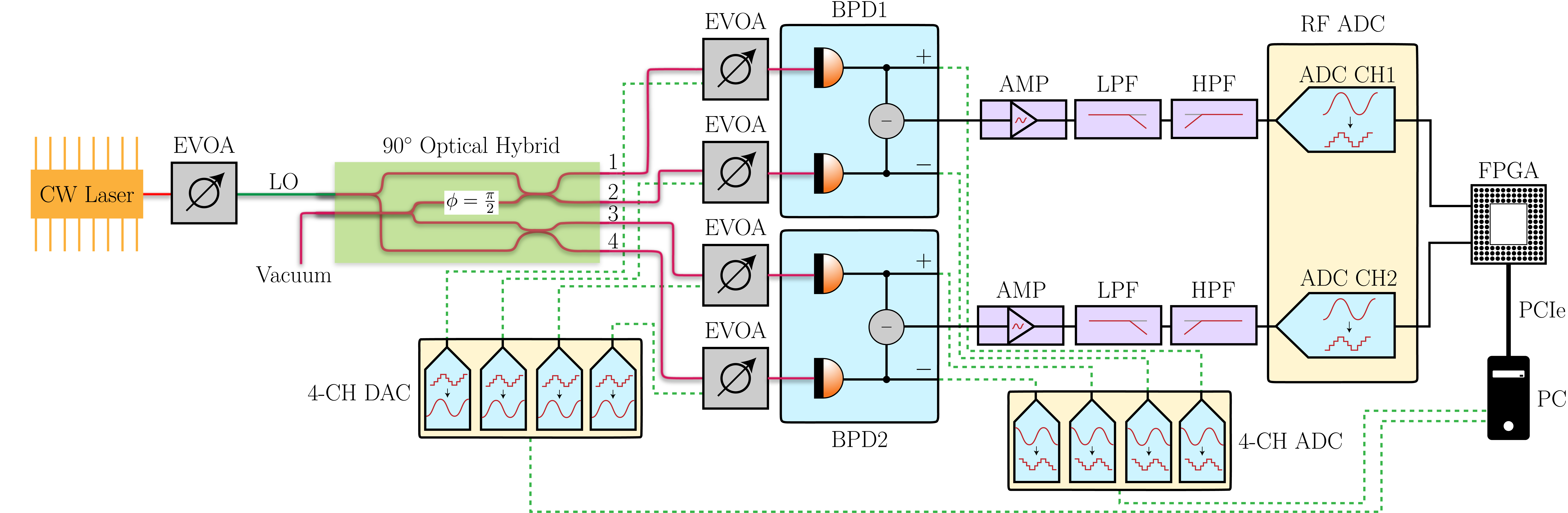}
\caption{Schematic diagram of the vacuum fluctuation-based quantum random number generator using heterodyne detection. The setup consists of a laser source followed by a variable optical attenuator (VOA) for power control and stabilization. The laser output is mixed with vacuum fluctuations in a 90$^\circ$ optical hybrid, which simultaneously measures both X and P quadratures of the electromagnetic field. Each of the four hybrid outputs are individually controlled by electronically variable optical attenuators (EVOAs) with electrical control inputs before being detected by balanced photodiode detectors (BPD1 and BPD2). The analog electrical signals from both detectors are first filtered by a low-pass and high-pass filters to filter out as much classical electrical noise as possible and then digitized by a 2-channel analog-to-digital converter (ADC) and processed in real-time by a field-programmable gate array (FPGA) connected via an FMC (FPGA Mezzanine Card) interface.}
\label{fig:qrng_setup}
\end{figure*}

\section{Experimental details}
\subsection{General Setup}
The experimental setup, shown in Fig.~\ref{fig:qrng_setup}, implements a heterodyne detection scheme for vacuum fluctuation-based quantum random number generation. The system is built on a continuous-wave laser source {LPSC-1550-FC with CLD1010LP laser diode driver from Thorlabs} operating at wavelength $\lambda=1550\,nm$ with output powers up to 30\,mW. The laser output serves as the local oscillator (LO). It passes through a variable optical attenuator (VOA) {EVOA1550A} to precisely control the LO power and to ensure its stability. The VOA is set to limit the detector-plane LO power to $\le 15$\,mW for the variance–slope calibration range.

The LO is input to a 90$^\circ$ optical hybrid {Optoplex model HB-C0AFAS066} operating in the telecom C-band. This device implements a 2×4 optical hybrid coupler that mixes the LO $E_{LO}$ with the reference signal (vacuum in this case) $E_{vac}$ from the environment to generate four quadratural states in the complex-field space.

Each of the four hybrid outputs is individually controlled by electronically variable optical attenuators (EVOAs) Thorlabs V1550A. These devices enable real-time gain balancing which allows for 90$^\circ$ optical hybrid outputs to be properly balanced. This ensures equal power distribution to maintain the quadrature measurement accuracy and compensate for any asymmetries in the optical path.

The four EVOA outputs are paired and fed into two Thorlabs PDB480C-AC balanced photodiode detectors (BPDs). Each BPD consists of a pair of matched photodiodes operated in differential mode. The 90$^\circ$ optical hybrid routes the LO and the signal (vacuum at the signal port) so that ports (1,2) and (3,4) form two balanced pairs with a relative phase shift of 180$^\circ$ within each pair and 90$^\circ$ between the two pairs. With the diodes in differential mode, these two balanced differences implement a simultaneous measurement of orthogonal quadratures (heterodyne). Denoting the differential voltages by $A$ and $B$ (for the two BPDs), we model
\begin{equation}
\begin{aligned}
A &\equiv V_X \;=\; \kappa_X \sqrt{P_{\rm LO}^{(\mathrm{det})}}\, X \;+\; n_{e,X},\\
B &\equiv V_P \;=\; \kappa_P \sqrt{P_{\rm LO}^{(\mathrm{det})}}\, P \;+\; n_{e,P},
\end{aligned}
\label{eq:hybrid}
\end{equation}
where $\kappa_{X/P}$ are responsivities [V$/\sqrt{\rm W}$], $X,P$ are heterodyne quadrature outcomes with $\langle X\rangle=\langle P\rangle=0$ and ${\rm Var}(X)={\rm Var}(P)=\tfrac12$, and $n_{e,X/P}$ are electronics noises (V). 

Hence the shot-noise slopes satisfy
\begin{equation}
m_X = \frac{\kappa_X^{\,2}}{2},\qquad m_P = \frac{\kappa_P^{\,2}}{2}.
\end{equation}
Consequently, the variances scale linearly with LO power,
\begin{equation}
\begin{aligned}
\operatorname{Var}(A)= m_X P_{\rm LO}^{(\mathrm{det})} + \sigma_{e,X}^2,\\
\operatorname{Var}(B)= m_P P_{\rm LO}^{(\mathrm{det})} + \sigma_{e,P}^2,
\end{aligned}
\end{equation}

where $\sigma_{e,X}^2$, $\sigma_{e,P}^2$ are the measured voltage variances from the photodiodes. These quantities underpin the slope–offset calibration used in Sec.~\ref{sec:calibdelta}.

The analog electrical signals from both BPDs, representing the $X$ and $P$ quadratures, first go through a ZX60-P105LN+ 15 dB amplifier to ensure that most of the ADC input range is covered, then they go through a series connected low-pass filter (LPF) and high-pass filter (HPF). The LPF is a SLP-1650+ with a bandwidth of DC-1650 MHz and the HPF is a SHP-200+ with a bandwidth of 185-3000 MHz, yielding an effective bandpass filter with a bandwidth of 185-1650\,MHz. The upper limit was chosen at 1650\,MHz since the diodes have a bandwidth of up to 1600 MHz and the sampling rate is 3.2\,GHz, which puts the bandwidth slightly outside the Nyquist frequency; the \emph{LPF} nominal passband is DC-1400\,MHz with a -3\,dB point at 1650\,MHz. The lower bound was chosen at 185 MHz to reduce the classical lower frequency noise originating in the diodes. Lastly, the diode outputs are digitized by a 2-channel ADC ADC12DJ5200RFEVM with a sampling rate of $f_s=3.2$\,GHz set for both channels and a resolution of 12 bits. The digitized quadrature data streams are transmitted to a field-programmable gate array (FPGA) Virtex Ultrascale+ VCU118 development board via a high-speed FMC (FPGA Mezzanine Card) interface. The FPGA implements a real-time randomness extraction algorithm via Toeplitz hashing to generate provably random bit sequences from the quantum vacuum fluctuations by flattening the Gaussian distribution of the data.

\subsection{Randomness Generation and Extraction}
Heterodyne measurement forms the foundation of SDI quantum random number generation by providing cryptographic security without requiring assumptions about the quantum source. Unlike homodyne detection that measures a single quadrature, heterodyne detection simultaneously measures both X and P quadratures of the electromagnetic field, enabling higher generation rates and enhanced security. The source device independence property emerges from the fundamental structure of the heterodyne measurement itself. In an SDI framework, an adversary may have complete control over the quantum source, manipulating it to maximize their ability to predict measurement outcomes. However, the security is guaranteed by exploiting the properties of the Positive Operator Valued Measurement (POVM), which is inherent to heterodyne measurements~\cite{Baragiola_2017}.

\begin{figure*}[]
\centering
\includegraphics[width=18cm]{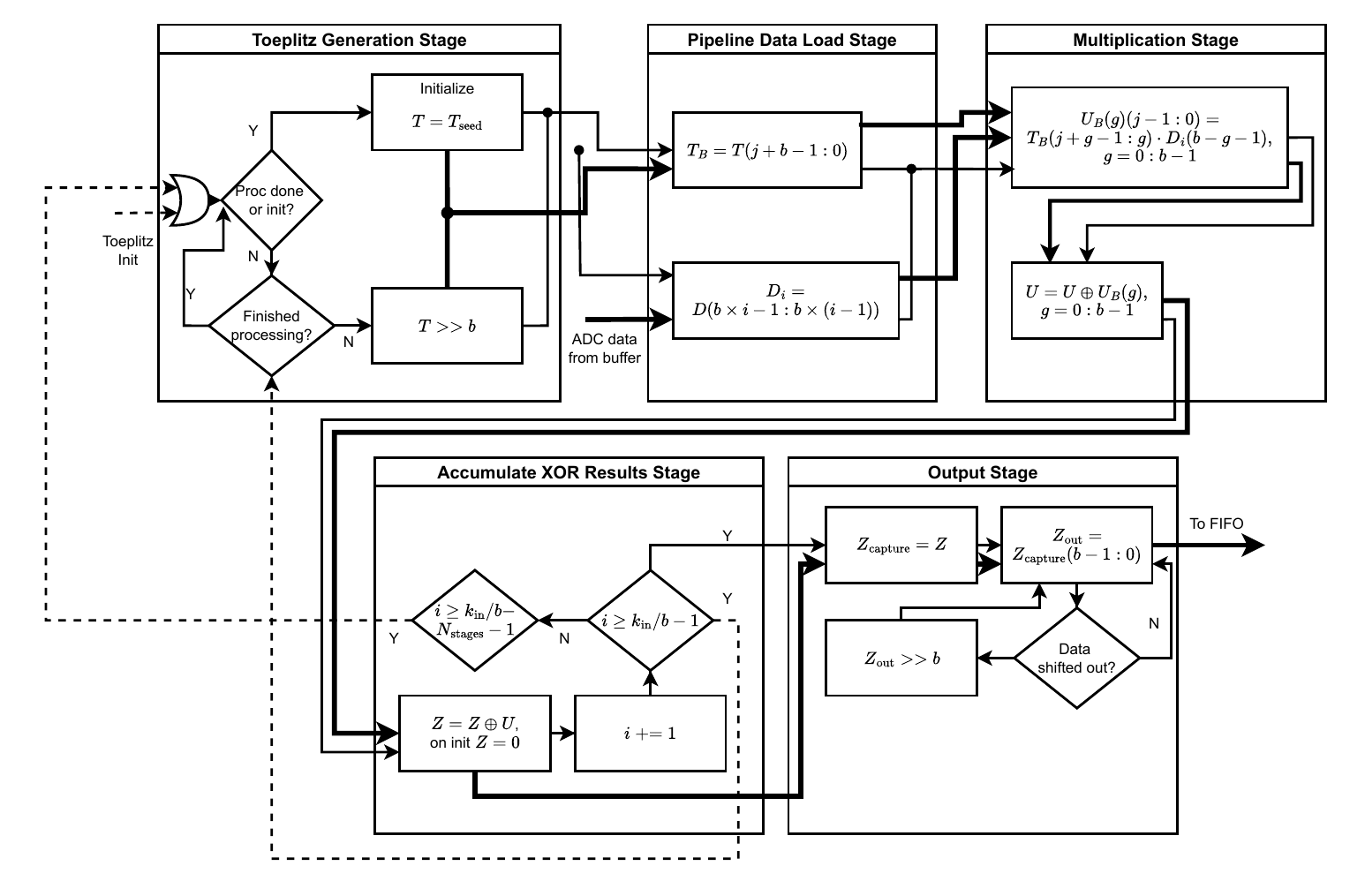}
\caption{The pipelined Toeplitz extraction scheme. The scheme consists of separate pipeline stages. The dashed arrows indicate binary signals, the normal lines indicate the flow of the program and the bold, larger lines indicate flow of data. The square blocks represent operations and the diamond blocks represent conditionals.}
\label{fig:extraction}
\end{figure*}

Minimum entropy quantifies the worst-case unpredictability of a random variable in the presence of an adversary.
For a random variable $X$ with probability distribution $p(x)$,
\begin{equation}
    H_{min}(X) = -\log_2\big(\max_{x}p(x)\big).
    \label{eq:hmin}
\end{equation}
In QRNG security we use the smooth conditional min-entropy $H_{\min}^{\varepsilon_s}(\cdot|E)$ where $E$ denotes adversary's (quantum) side information; we write $H$ for classical protocol history when needed. Let $Z$ denote the discretised heterodyne outcome (ADC bin index). Smooth conditional min-entropy evaluates non-smooth entropies on the optimal state within an $\epsilon$-neighbourhood of the measured state, where $\epsilon$ is the security parameter, while the per-round heterodyne discretization bound is state-independent and holds \emph{unsmoothed}~\cite{Avesani_2018}:
\begin{equation}
   H_{min}(X|E) \ge
    -\log_2\Bigl(\min\bigl\{\tfrac{\delta'_X\delta'_P}{2\pi},\,1\bigr\}\Bigr),
    \label{eq:hmim_side}
\end{equation}

The heterodyne POVM is $\Pi(\alpha)=|\alpha\rangle\langle\alpha|/\pi$, so $Q_\rho(\alpha)=\langle\alpha|\rho|\alpha\rangle/\pi$ and $\sup_\rho Q_\rho=1/\pi$.
With ${\rm Var}(X)={\rm Var}(P)=\tfrac12$ we have $\alpha=(X+iP)/\sqrt{2}$ and $d^2\alpha=dX\,dP/2$.
If the ADC defines rectangles $\{R_{mn}\}$ of area $\delta'_X\delta'_P$ in $(X,P)$, then $\mathrm{Area}_\alpha(R_{mn})=\delta'_X\delta'_P/2$.
Hence (using $d^2\alpha=dX\,dP/2$ for vacuum units ${\rm Var}(X)={\rm Var}(P)=1/2$)
\begin{align}
p_{\max} := \max_{m,n}\int_{R_{mn}} Q_\rho(\alpha)\,d^2\alpha
      \;\le\; \min\Bigl\{\tfrac{\delta'_X\delta'_P}{2\pi},\,1\bigr\},
\label{eq:pmax}
\end{align}
and
\begin{align}
h_{\min}^{(1)} \;\ge\; -\log_2\Bigl(\min\bigl\{\tfrac{\delta'_X\delta'_P}{2\pi},\,1\bigr\}\Bigr).
\label{eq:h1}
\end{align}
For any (possibly adversarial) preparation and any past transcript,
\begin{equation}
\max_{m,n}\Pr\big[(X,P)\in R_{mn}\,|\,\text{history},\mathcal{E}\big]\le       \min\Bigl\{\tfrac{\delta'_X\delta'_P}{2\pi},\,1\Bigr\},
\end{equation}
and thus $h_{\min}^{(1)}\ge -\log_2\Bigl(\min\{\tfrac{\delta'_X\delta'_P}{2\pi},\,1\}\Bigr)$ holds per round without IID assumptions and is adaptivity-robust.

\begin{lemma}
\label{lem:opnorm}
Let $M_R=\int_R |\alpha\rangle\langle\alpha|\,\frac{d^2\alpha}{\pi}$ be the heterodyne POVM element for a measurable bin $R\subset\mathbb{C}$. Then
\begin{equation}
    \lVert M_R\rVert_\infty \;\le\; \min\Bigl\{\tfrac{\mathrm{Area}(R)}{\pi},\,1\Bigr\}.
\end{equation}
\end{lemma}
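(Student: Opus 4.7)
The plan is to bound $\lVert M_R\rVert_\infty$ in two different ways and then take the minimum. Since $M_R$ is manifestly positive semidefinite (an integral of rank-one projectors with a positive measure), it is self-adjoint, and the operator norm equals $\sup_{\lVert\psi\rVert=1}\langle\psi|M_R|\psi\rangle$. I would therefore work with this variational expression throughout.

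For the bound by $1$, I would invoke the completeness (overcompleteness) relation of the coherent-state family, $\int_{\mathbb{C}}|\alpha\rangle\langle\alpha|\,d^2\alpha/\pi = I$. Writing $M_R = I - M_{R^c}$ with $M_{R^c}\ge 0$ gives $M_R \le I$ in the operator sense, hence $\lVert M_R\rVert_\infty \le 1$. Equivalently, for any unit $|\psi\rangle$, $\langle\psi|M_R|\psi\rangle \le \langle\psi|I|\psi\rangle = 1$.

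For the bound by $\mathrm{Area}(R)/\pi$, I would exchange the integral and the inner product to get
\begin{equation}
\langle\psi|M_R|\psi\rangle \;=\; \int_R |\langle\alpha|\psi\rangle|^2\,\frac{d^2\alpha}{\pi}.
\end{equation}
Because coherent states are normalised and $|\psi\rangle$ is a unit vector, Cauchy--Schwarz yields the pointwise bound $|\langle\alpha|\psi\rangle|^2 \le 1$ (this is the standard statement $\pi Q_\psi(\alpha)\le 1$ for the Husimi function). Integrating this pointwise bound over $R$ gives $\langle\psi|M_R|\psi\rangle \le \mathrm{Area}(R)/\pi$, uniformly in $|\psi\rangle$, and taking the supremum completes this half.

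Combining the two bounds gives $\lVert M_R\rVert_\infty \le \min\{\mathrm{Area}(R)/\pi,\,1\}$. I do not anticipate a genuine obstacle; the only point that deserves a line of justification is the interchange of the (Bochner/weak) integral with the bilinear form $\langle\psi|\cdot|\psi\rangle$, which is legitimate here because the integrand $|\alpha\rangle\langle\alpha|/\pi$ is a positive operator-valued measure on a finite-measure set (after restriction to $R$ whenever $\mathrm{Area}(R)<\infty$) and the scalar integrand $|\langle\alpha|\psi\rangle|^2$ is bounded. Once this is noted, both estimates are essentially one-line consequences of the resolution of identity and Cauchy--Schwarz, respectively.
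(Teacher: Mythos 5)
Your proposal is correct and follows essentially the same route as the paper's proof: the quadratic form $\langle\psi|M_R|\psi\rangle=\int_R Q_\psi(\alpha)\,d^2\alpha$ with the pointwise Husimi bound $Q_\psi\le 1/\pi$ gives the $\mathrm{Area}(R)/\pi$ estimate, and completeness of the coherent-state POVM gives $0\le M_R\le\mathbb{I}$, hence the bound by $1$; taking the minimum finishes the argument. Your extra remarks (Cauchy--Schwarz for $|\langle\alpha|\psi\rangle|^2\le 1$ and the justification of interchanging the weak integral with the quadratic form) are just more explicit versions of steps the paper takes for granted.
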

\begin{proof}
For any unit vector $|\psi\rangle$, $\langle\psi|M_R|\psi\rangle=\int_R Q_\psi(\alpha)\,d^2\alpha$ with
$Q_\psi(\alpha)=|\langle\psi|\alpha\rangle|^2/\pi\le 1/\pi$; hence
$\langle\psi|M_R|\psi\rangle\le\mathrm{Area}(R)/\pi$.
Completeness of the POVM gives $\int_{\mathbb{C}}|\alpha\rangle\langle\alpha|\,\frac{d^2\alpha}{\pi}=\mathbb{I}$,
so $0\le M_R\le\mathbb{I}$ and $\lVert M_R\rVert_\infty\le 1$.
Taking the tighter of the two bounds yields the stated minimum.
\end{proof}

Boundary (rail-adjacent but non-clipped) bins are treated as truncated rectangles with their actual area used in \eqref{eq:pmax}. ADC \emph{clipping} codes result from voltages outside the input range $[-V_{\rm pp}/2,\,V_{\rm pp}/2]$ of the ADC. They correspond to semi-infinite regions and are excluded. Dropping clips is a post-selection on the event $\{\text{no clip}\}$ of probability $1-f_{\rm clip}$, which renormalises the per-round distribution. This conditioning is secure because clipping is determined solely by the trusted ADC's range. Consequently, the single-round bound for the post-selected data becomes
\begin{equation}
h_{\min,\mathrm{no\;clip}}^{(1)} \;\ge\;
-\log_2\Bigl(\min\bigl\{\tfrac{\delta'_X\delta'_P}{2\pi(1-f_{\rm clip})},\,1\bigr\}\Bigr).
\label{eq:h1_noclip}
\end{equation}
and the extraction length obeys
\begin{equation}
\begin{aligned}
\ell\;\le\;n_{\rm valid}\,h_{\min,\;\mathrm{no\;clip}}^{(1)}\;-\;2\log_2\frac{1}{\varepsilon_{\rm PA}},\\
n_{\rm valid}=n(1-f_{\rm clip}),
\end{aligned}
\end{equation}
where $n$ is the total number of rounds entering extraction.

\subsection*{Calibration and vacuum-unit resolutions}
\label{sec:calibdelta}
 The bound depends only on the discretization induced by the apparatus. We determine the raw resolutions $\delta_{X/P}$ and the conservative resolutions $\delta'_{X/P}$ from a slope–offset calibration and the ADC’s effective code width as follows.
\begin{equation}
\begin{aligned}
\sigma^2_{V,X}(P_{\rm LO}) = m_X P_{\rm LO} + c_X,\\
\sigma^2_{V,P}(P_{\rm LO}) = m_P P_{\rm LO} + c_P,
\label{eq:varfit}
\end{aligned}
\end{equation}
with $m_{X/P}$ in $\mathrm{V}^2/\mathrm{W}$ and $c_{X/P}\ge0$ in $\mathrm{V}^2$.
To exclude LO relative-intensity-noise leakage due to residual imbalance, we fit
\begin{equation}
\begin{aligned}
\sigma^2_{V,X}(P_{\rm LO}) = m_X P_{\rm LO} + c_X + \eta_X P_{\rm LO}^2,\\
\sigma^2_{V,P}(P_{\rm LO}) = m_P P_{\rm LO} + c_P + \eta_P P_{\rm LO}^2,
\end{aligned}
\end{equation}
and test $\eta_{X/P}=0$ via lack-of-fit. We found $|\eta_{X/P}|P_{\rm LO,\max}\ll m_{X/P}$ within the calibration range; thus linear shot-noise slopes are valid. If $\eta\neq 0$, its effect is absorbed by increasing $\gamma_{X/P}$ in \eqref{eq:deltas-infl}.

Here $c_X=\sigma_{e,X}^2$ and $c_P=\sigma_{e,P}^2$, consistent with the detector model in \eqref{eq:hybrid}. Since ${\rm Var}(X)={\rm Var}(P)=\tfrac12$ for vacuum, the conversion factors to vacuum units are:
\begin{equation}
\alpha_X=\sqrt{2 m_X P_{\rm LO}^{(\mathrm{det})}},\quad \alpha_P=\sqrt{2 m_P P_{\rm LO}^{(\mathrm{det})}},
\label{eq:alphas}
\end{equation}
here $P_{\rm LO}^{(\mathrm{det})}$ denotes the \emph{per-BPD} LO power at the photodiodes \emph{after} the 90$^\circ$ hybrid and EVOA balancing (i.e., the sum incident on the two diodes of a given BPD). All power readings used in \eqref{eq:alphas} are taken at this plane. We define per-sample vacuum-unit coordinates by
\begin{equation}
\hat X := \frac{A}{\alpha_X},\qquad \hat P := \frac{B}{\alpha_P},
\end{equation}
so that for vacuum $\operatorname{Var}(\hat X)=\operatorname{Var}(\hat P)=\tfrac12$. With ADC peak-to-peak range $V_{\rm pp}=0.5$\,V after the analog front end, define the effective code width
\begin{equation}
\Delta V_{\rm eff}=\frac{V_{\rm pp}}{2^{\,n_{\rm ENOB}}},
\label{eq:deltaVeff}
\end{equation}
where $n_{\rm ENOB}=10.2$ (X channel) and $10.3$ (P channel) is the measured effective number of bits of the ADC (sine-wave histogram method, averaged over 185–1600\,MHz). Datasheet ENOB variation is absorbed into $\beta$. The raw resolutions are
\begin{equation}
\delta_X=\frac{\Delta V_{\rm eff}}{\alpha_X},\qquad \delta_P=\frac{\Delta V_{\rm eff}}{\alpha_P}.
\label{eq:deltas-raw}
\end{equation}
Let $\sigma_{X}^2$ and $\sigma_{P}^2$ denote the measured quadrature variances expressed in vacuum units after applying the gains $\alpha_{X/P}$ from \eqref{eq:alphas}.
To conservatively include excess noise or calibration uncertainty, define
\begin{equation}
\begin{aligned}
\gamma_X:=\max\Big\{1,\sqrt{\sigma_X^2/(1/2)}\Big\},\\
\gamma_P:=\max\Big\{1,\sqrt{\sigma_P^2/(1/2)}\Big\},
\end{aligned}
\label{eq:inflate-factors}
\end{equation}
\begin{equation}
\delta'_X=\gamma_X\,\delta_X,\qquad \delta'_P=\gamma_P\,\delta_P.
\label{eq:deltas-infl}
\end{equation}
More generally, the single-shot bound reduction is $\log_2(\gamma_X\gamma_P)$. With the measured $(\sigma_X^2,\sigma_P^2)$, the single-shot penalty is
\begin{align}
\Delta h_{\min}^{(1)}&=\log_2(\gamma_X\gamma_P) \nonumber\\
&=\tfrac12\log_2\!\bigl(\max\{1,2\sigma_X^2\}\bigr)
  +\tfrac12\log_2\!\bigl(\max\{1,2\sigma_P^2\}\bigr).
\end{align}
For $(\sigma_X^2,\sigma_P^2)=(0.616,0.647)$ this gives $\Delta h_{\min}^{(1)}\approx 0.336~\text{bits/round}$.
In the symmetric case $\sigma_X^2=\sigma_P^2=\tfrac12(1+\xi)$, where $\xi$ is the relative excess noise above the vacuum variance, one has $\gamma_X=\gamma_P=\sqrt{1+\xi}$ and the reduction equals $\log_2(1+\xi)$. The guard $\gamma\ge1$ prevents artificial entropy gain if $\sigma^2<1/2$ due to calibration uncertainty.
The empirical cross-quadrature correlation was $|\rho_{XP}|={0.012}\pm{0.004}$. We apply a fixed orthonormal (unit-Jacobian) rotation $R$ (eigenbasis of the covariance matrix) to decorrelate, yielding $|\rho_{XP}^{\rm (out)}|<10^{-4}$ and no Jacobian penalty enters the bound. The resulting correlation penalty would be $\Delta h_{\rm corr}=-\frac12\log_2(1-(\rho_{XP}^{\rm (out)})^2)<8\times10^{-5}$ bits, confirming it is negligible.

We propagate uncertainties in $(m_{X/P},c_{X/P},V_{\rm pp},n_{\rm ENOB})$ to $(\delta'_X,\delta'_P)$ and to $h_{\min}^{(1)}$ by first-order error propagation; error bars for $\delta'_{X/P}$ and the net rate include these calibration contributions.

\begin{figure*}[]
\centering
\includegraphics[width=18cm]{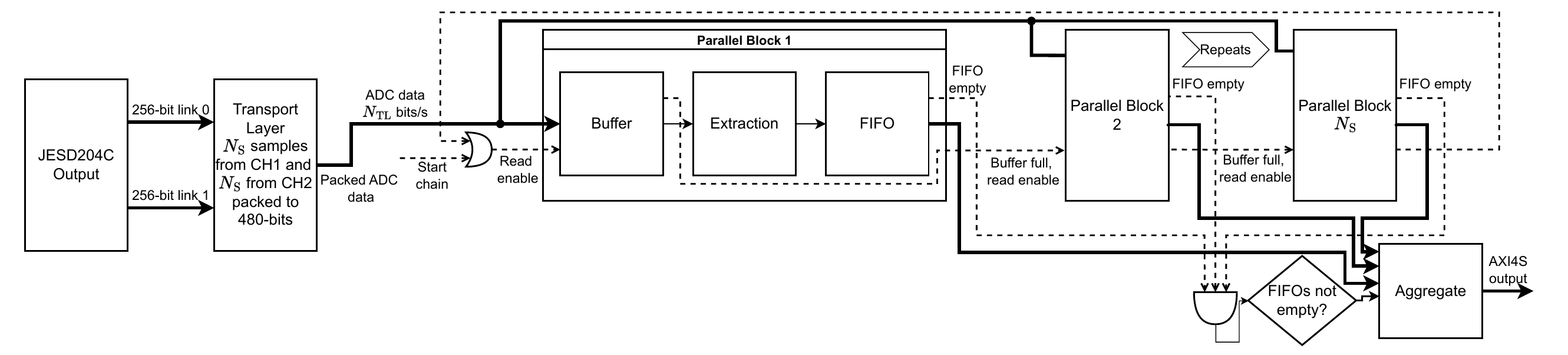}
\caption{The overall schematic of the FPGA design. All parallel blocks contain the same functions and I/O, only the first block is fully drawn to make the diagram more compact. The dashed arrows indicate binary signals, the normal lines indicate the flow of the program and the bold, larger lines indicate flow of data.}
\label{fig:parallel}
\end{figure*}

The security proof relies on the fact that for any quantum state, the Husimi function $Q_{\rho}(q + ip)$ is upper bounded by $1/\pi$. This fundamental quantum mechanical constraint ensures that even if an adversary prepares an optimal quantum state to maximize their guessing probability, the conditional min-entropy remains bounded by the measurement resolution alone~\cite{Avesani_2018}. This bound is independent of the actual quantum state prepared by the potentially malicious source, providing information-theoretic security guarantees. Practical implementations must account for device imperfections that can compromise this security. Imbalanced heterodyne detection, arising from non-ideal beam splitter ratios or photodiode efficiency mismatches, introduces local oscillator fluctuation that contributes excess noise. Under imbalanced conditions, LO fluctuation cannot be eliminated and affects the calibration of vacuum fluctuations~\cite{Li_23}. Without proper consideration of these fluctuations, the extractable randomness can be overestimated, creating security vulnerabilities. For this reason, we use additional EVOAs at the 90$^\circ$ optical hybrid, since it does not show perfect 50:50 splitting ratio.

As mentioned before, upon initial measurements of this system, the resulting distribution is Gaussian, which makes it easier to predict the generated random numbers. For this reason, a randomness extractor has to be used to flatten the distribution and extract the QRNG which is mixed with classical noise sources. One of the more commonly used extractors is Toeplitz hashing~\cite{Bai_2021, Zheng_2018, Nie_2015,Bruynsteen_2023,Avesani_2018, Guo_2025}, belonging to universal$_2$ family and hence a \emph{strong} extractor~\cite{Tomamichel_2011, Krawczyk_1994}: for a public, uniform, independent seed $S$, $(S,\mathrm{Ext}(Z^n,S)) \approx (S,U_j)$.

We reuse the same public seed $S$ across all parallel output blocks within one extractor invocation; Toeplitz hashing is a strong (universal$_2$) extractor, and our single-round heterodyne min-entropy bound is independent of $S$ and the classical transcript. Hence, by a hybrid/triangle-inequality argument applied block by block, the joint trace distance satisfies
\begin{equation}
\begin{aligned}
    &\big\|(S,\mathrm{Ext}(Z^{n}_{(1)},S),\ldots,\mathrm{Ext}(Z^{n}_{(B)},S),E)- \\
    &(S,U_{\ell_1},\ldots,U_{\ell_B},E)\big\|_{\mathrm{tr}}
\;\le\;\sum_{i=1}^{B}\big(\varepsilon_s^{(i)}+\varepsilon_{\rm PA}^{(i)}\big)
\end{aligned}
\end{equation}
where $\varepsilon_s^{(i)}$ is the smoothing error of a block and $\varepsilon_{\rm PA}^{(i)}$ is the privacy amplification error of a block. In this way we allocate $(\varepsilon_s^{(i)},\varepsilon_{\rm PA}^{(i)})$ so that the sum over blocks equals the advertised $\varepsilon_{\rm sec}$.

Toeplitz extraction multiplies a $k_{\rm in}$-bit input by a $j\times k_{\rm in}$ Toeplitz matrix specified by a $(j+k_{\rm in})$-bit seed. Here $k_{\rm in}=n_{\rm eff}\,b$ with $b=24$ bits/round (two 12-bit ADCs), and $j=\ell$ is the extracted length from the min-entropy bound, all arithmetic is over $\mathbb{F}_2$.

The heterodyne bound is state- and history-independent: for every retained round $t\le n_{\rm eff}$ and any classical transcript $H$,
\begin{equation}
    p_{\max}^{(t)}\;\le\;\min\Bigl\{\tfrac{\delta'_X\delta'_P}{2\pi},\,1\Bigr\}.
\end{equation}

Because the trusted device applies the same per-round POVM $\{M_z\}$ to each temporal mode and does not adapt to $H$, one has for any cq state with arbitrary inter-round correlations
\begin{equation}
\begin{aligned}
&H_{\min}^{\varepsilon_s}(Z^{n_{\rm eff}}|E,H)\;\ge\; n_{\rm eff}\,h_{\min}^{(1)},\\
&h_{\min}^{(1)}=-\log_2\Bigl(\min\bigl\{\tfrac{\delta'_X\delta'_P}{2\pi},1\bigr\}\Bigr).
\end{aligned}
\end{equation}
where $n_{\rm eff}$ is the number of temporal modes on which the trusted POVM $\{M_z\}$ acts identically.

We report integrated autocorrelation time $\tau_{\mathrm{int}}$ as a diagnostic of classical correlations in the sampled voltage stream. A conservative way to enforce approximate factorization is to define rounds on a decimated stream with spacing $D$ samples (e.g.\ $D\ge\lceil\tau_{\rm int}\rceil$), leading to $n_{\rm eff}=n/D$ and an effective sampling rate $f_s^{(\mathrm{eff})}=f_s/D$. In our reported security analysis and rates, we instead apply a fixed unit-Jacobian orthonormal transform on blocks before discretization, so that the $n$ samples are mapped to $n$ orthonormal temporal modes measured by the same POVM $\{M_z\}$; in this case we take $n_{\rm eff}=n$ and $f_s^{(\mathrm{eff})}=f_s$, and $\tau_{\rm int}$ is used purely as a diagnostic check that residual correlations after extraction are within statistical error.
The analog front-end and sampling chain are trusted and absorbed into the calibration that defines $\delta'_{X/P}$. After enforcing factorization as above, the product-POVM structure holds on the retained temporal modes and we use $H_{\min}(Z^{n_{\rm eff}}|E)\ge n_{\rm eff}\,h_{\min}^{(1)}$.

We separate the secrecy parameter from statistical-estimation error, thus we define
\begin{equation}
\varepsilon_{\rm sec}\;=\;\varepsilon_s+\varepsilon_{\rm PA}+\beta,
\end{equation}
where $\beta$ upper-bounds the probability that calibration parameters deviate outside their quoted confidence intervals.
Privacy amplification obeys
\begin{equation}
\ell \le H_{\min}^{\varepsilon_s}(Z^{n_{\rm eff}}|E) - 2\log_2\frac{1}{\varepsilon_{\rm PA}}.
\label{eq:l_hmin}
\end{equation}
Here $E$ denotes adversarial (quantum) side information. Since $H_{\min}^{\varepsilon_s}(\cdot|E)\ge H_{\min}(\cdot|E)$ for any $\varepsilon_s\in[0,1)$, our unsmoothed lower bound $H_{\min}(Z^{n_{\rm eff}}|E)\ge n_{\rm eff}\,h_{\min}^{(1)}$ also lower-bounds the smoothed quantity used in privacy amplification.

Here $\beta$ upper-bounds the probability that the calibration parameters $(m_{X/P},c_{X/P},\Delta V_{\rm eff},n_{\rm ENOB})$ deviate outside their quoted confidence intervals; it is set via conservative concentration bounds for the variance fits and datasheet limits. We set $\varepsilon_s=\varepsilon_{\rm PA}=\beta=2^{-64}$, yielding $\varepsilon_{\rm sec}=3\cdot 2^{-64}$.
If the data are output in $N_{\rm blk}$ extractor invocations (fresh or reused public seed), we allocate per-block parameters $(\varepsilon_s/N_{\rm blk},\,\varepsilon_{\rm PA}/N_{\rm blk},\,\beta/N_{\rm blk})$
so that a union bound guarantees the advertised global bound on $\varepsilon_{\rm sec}$.

\subsection{FPGA Implementation}
The FPGA receives 20 samples for each channel per 160 MHz clock cycle, or 480-bits of data, over 16 JESD204C data lanes.  This results in a total data rate of 76.8 Gbps at 3.2 GHz sample rate and 2-channel acquisition. Therefore, both pipelining and parallelization were applied to ensure that the data is processed in real-time. Fig.~\ref{fig:extraction} shows how the Toeplitz extraction scheme was implemented. In the figure shown, there are 5 pipelines stages, however, there is an additional initial stage for buffering ADC input, which is covered in further text.

Once there is data available in the input buffer, the pipeline starts with the Toeplitz generation stage. It can be seen that there is an OR gate for two signals, one which is triggered when Toeplitz extraction is done and another called Toeplitz init, which is only set when the FPGA is reset, so as to ensure proper Toeplitz matrix initialization on FPGA power on. If one of these signals is triggered, the initial Toeplitz vector signal $T$ is set from the independently generated seed. When neither of the two aforementioned signals is triggered, it indicates that Toeplitz extraction is running, therefore $T$ is shifted to the right by the combined bit-depth $b$. The right shift is needed because a block of the Toeplitz matrix $T_B$ is taken in the next pipeline data loading stage. A total of $j+b$ bits are taken for a block, which effectively corresponds to $b$ columns of the Toeplitz matrix, since, if the column size is $j$, the next column can be formed by $T_B(j+1:1)$ ($T_B$ here denotes a block of the Toeplitz matrix). When $T$ is shifted to the right by $b$, it results in taking further $b$ columns of the Toeplitz matrix. In the same block, the $b$-bit ADC word $D_i$ is taken every cycle from the buffer $D$.

Following is the multiplication stage, where first, the multiplication is done for the block of Toeplitz matrix, which is then stored in an intermediate two-dimensional vector $U_B$. As mentioned before, the numbers in the matrices are in binary, therefore mod 2 arithmetic applies. Therefore, the multiplication is calculated by applying AND bitwise operation on a single column of the Toeplitz matrix by a single bit of the ADC data $D_i$. This operation is done over a range of $0$ to $b-1$ so as to multiply all $b$ columns in the same cycle. Afterwards, a XOR reduction is done on $U_B$, which corresponds to the $b$-column multiplication sum, and is stored in $U$. After this, $U$ goes to the XOR accumulation stage, where each clock cycle $U$ is accumulated in the final result vector $Z$. This is done $k_{\rm in}/b$ times, until the ADC input buffer is fully processed. In this stage there are also two conditions: $i \ge k_{\rm in}/b - 1$ is used to signal when $Z$ is fully accumulated and it can be output, and $i \ge k_{\rm in}/b - N_{\mathrm{stages}}-1$, where $N_{\mathrm{stages}}$ is the number of stages between the Toeplitz generation and accumulation stages (in this case 3), which is necessary to ensure timely initialization of $T$, since the Toeplitz generation stage operates ahead of the accumulation stage due to pipelining created latency.

When the $i \ge k_{\rm in}/b - 1$ condition is satisfied, the output stage is started, where firstly $Z$ is captured to $Z_{\mathrm{capture}}$ to ensure that the extracted value is not overridden by further calculations in the pipeline. The data is output in pieces of $b$ bits for $j/b$ cycles by shifting $Z_{\mathrm{out}}$ to the right by $b$. Lastly, it should be mentioned that between all these stages there's a single clock cycle latency and that initially all stages are inactive until the first Toeplitz stage becomes active.

Fig.~\ref{fig:extraction} shows only the extraction diagram of a single block. If $f_{\mathrm{ADC}}$ is the ADC sampling frequency (per channel), the total input rate is $b\,f_{\mathrm{ADC}}$ bits/s, where $b=2\,n_{\rm ADC}=24$ bits/round. The FPGA receives $N_{\mathrm{S}}$ samples per cycle per channel, so the reference clock is $f_{\mathrm{ref}}=f_{\mathrm{ADC}}/N_{\mathrm{S}}$. A single block processes $b$ bits per clock cycle, i.e., $b\,f_{\mathrm{ref}}$ bits/s. Thus the number of parallel blocks required for full throughput is $N_{\mathrm{B}}=\frac{b f_{\mathrm{ADC}}}{b f_{\mathrm{ref}}}=N_{\mathrm{S}}$. In our setup $N_{\mathrm{S}}=20$, so 20 blocks are used to extract randomness in real time.

Fig.~\ref{fig:parallel} shows the full implementation of the FPGA design. Starting on the left, there is the JESD204C protocol output of the ADC, which goes into the transport layer, where first JESD204C lanes are appropriately mapped, then samples are extracted from the frames and finally the values are packed into 480-bit AXI4 Stream buses, where both CH1 and CH2 samples are interleaved in 24-bit words. This data is then sent to all the extraction blocks. All the blocks are interconnected by the buffering logic. The buffer is an additional pipeline stage part of the extraction schematic shown in Fig.~\ref{fig:extraction}. It consists of additional logic that ensures that the right amount of data is being read and to allow for proper operation with other parallel extraction blocks. More specifically, the buffer block additionally has a read enable input and a buffer full output. The buffer does not take any data until read enable is asserted and the buffer full output is asserted when the buffer is filled.

Such a design allows for daisy chaining the blocks and allow them to work in parallel. As seen in Fig.~\ref{fig:parallel} the read enables and the buffer full signals are chained together, with the exception of the first block, where there is an OR gate to the input of read enable. Initially, all of the blocks are inactive. Once ADC data becomes valid, the start chain signal connected to the OR gate of the first block read enable is asserted, which starts the ADC sample collection in the first block's buffer. Once the buffer is full, the buffer full signal is asserted and the collection starts in the second block and continues until the last block, where the buffer full signal is fed back to the first one, repeating the chain again. Finally, after extraction, the random bits are sent into a FIFO. The FIFO empty signals from each block are checked and when the condition of all FIFOs not being empty is satisfied, the FIFO outputs are read and are collated into a single vector signal.

To realize the full potential of the high bandwidth QRNG, the extracted data is sent over x16 PCIe to a computer. The PCIe interface on the FPGA runs with a clock speed of 250 MHz and 512-bit depth. To ensure proper clock domain crossing (CDC), the QRNG data first goes into an asynchronous FIFO, from which the PCIe interface reads. PCIe on the FPGA is handled by the XDMA IP provided by AMD. The computer that the FPGA is connected to runs Linux and uses the AMD provided XDMA driver to stream the data from the FPGA to the computer over an AXI4 Stream interface. There is an additional, not shown channel of the PCIe used to also transfer the raw, unextracted data to the computer. It is used both for security analysis and for monitoring the system during operation. The full implementation was successfully placed and routed without any timing issues. The used resources for a Toeplitz extraction for {$j = 1272$, $k = 2880$} and 20 parallel blocks can be seen in Table~\ref{tab:resource_usage}. The most significant usage is seen in the LUTs, which is expected, since even though the operations are not applied on a full matrix, the vector widths are still relatively large and therefore a lot of LUTs are needed to perform the bitwise operations over 20 independent extraction blocks. The matrix size could be slightly increased, however, at higher resource usage values the design has higher likelihood to fail implementation, since when usage is over 75 \%, congestion becomes an issue, making the placement and routing of logic challenging~\cite{amd2025_ug949}.

\begin{table}[h]
\centering
\begin{tabular}{|l|r|r|r|}
\hline
Resource & Used & Available & Utilization (\%) \\
\hline
LUT     & 596422 & 1182240 & 50.4 \\
LUTRAM  & 12697  & 591840  & 2.1 \\
FF      & 329457 & 2364480 & 13.9 \\
BRAM    & 197    & 2160    & 9.1 \\
DSP     & 20     & 6840    & 0.3 \\
IO      & 24     & 832     & 2.9 \\
GT      & 32     & 52      & 61.5 \\
BUFG    & 25     & 1800    & 1.4 \\
MMCM    & 1      & 30      & 3.3 \\
PCIe    & 1      & 6       & 16.7 \\
\hline
\end{tabular}
\caption{FPGA resource utilization for {$j = 1272$ and $k_{\rm in} = 2880$} Toeplitz matrix extraction algorithm with 20 parallel blocks.}
\label{tab:resource_usage}
\end{table}

\section{Results and Discussion}
In Fig.~\ref{fig:fft} we show the power spectral density (PSD) of both ADC channels when measuring with LO and without LO. The results are similar to other heterodyne and homodyne set-ups, where we see an LO-induced noise-power increase of $\approx 7.5~\mathrm{dB}$ (X) and $\approx 6.5~\mathrm{dB}$ (P)~\cite{Zheng_2018, Shrivastava_2025, Shen_2010}, indicating that quantum noise is present, not only the classical noise. In particular,
\begin{equation}
\begin{aligned}
\Delta_{\rm dB}^{(X)}=10\log_{10}\left(\frac{m_X P_{\rm LO}^{(\mathrm{det})}+c_X}{c_X}\right),\\
\Delta_{\rm dB}^{(P)}=10\log_{10}\left(\frac{m_P P_{\rm LO}^{(\mathrm{det})}+c_P}{c_P}\right),
\end{aligned}
\end{equation}
which follow from \eqref{eq:varfit}. With
$m_X=2.41{\times}10^{-1}\,\mathrm{V}^2/\mathrm{W}$, $c_X=4.35{\times}10^{-4}\,\mathrm{V}^2$,
$m_P=2.33{\times}10^{-1}\,\mathrm{V}^2/\mathrm{W}$, $c_P=5.60{\times}10^{-4}\,\mathrm{V}^2$,
and $P_{\rm LO}^{(\mathrm{det})}=8.4\,\mathrm{mW}$,
\begin{equation}
\Delta_{\rm dB}^{(X)} \approx 7.52~\mathrm{dB},\qquad
\Delta_{\rm dB}^{(P)} \approx 6.52~\mathrm{dB}.
\end{equation}
One difference compared to the other references is the dip in intensity we see below 0.2 GHz, which is caused by the HPF we have added which has a bandwidth of 185 to 3000 MHz. The resulting filter band (185–1650\,MHz) lies only slightly outside of Nyquist frequency for $f_s=3.2$\,GHz ($f_s/2=1.6$\,GHz), so residual aliasing into the analysis region is negligible. Another inconsistency is the observation of spikes in the spectrum. The spikes at the highest frequency are related to spurious codes of the ADC at $f_s/2$. The spikes at lower frequencies are related to electromagnetic interference resulting from surrounding electronics because they can be seen with LO off or on, indicating that their source is not related to the optics. These spikes are narrow in bandwidth and do not impact the security after extraction.

As a consistency check, we integrate the PSD difference (LO on minus LO off) over the analysis band (185–1600\,MHz) to obtain the quantum-to-classical noise ratio (QCNR), reported with uncertainty. This cross-check is consistent with the slope-based calibration used for $\delta'_{X/P}$ and is not used directly in the entropy bound.

\begin{figure}[htbp]
\includegraphics[width=8cm]{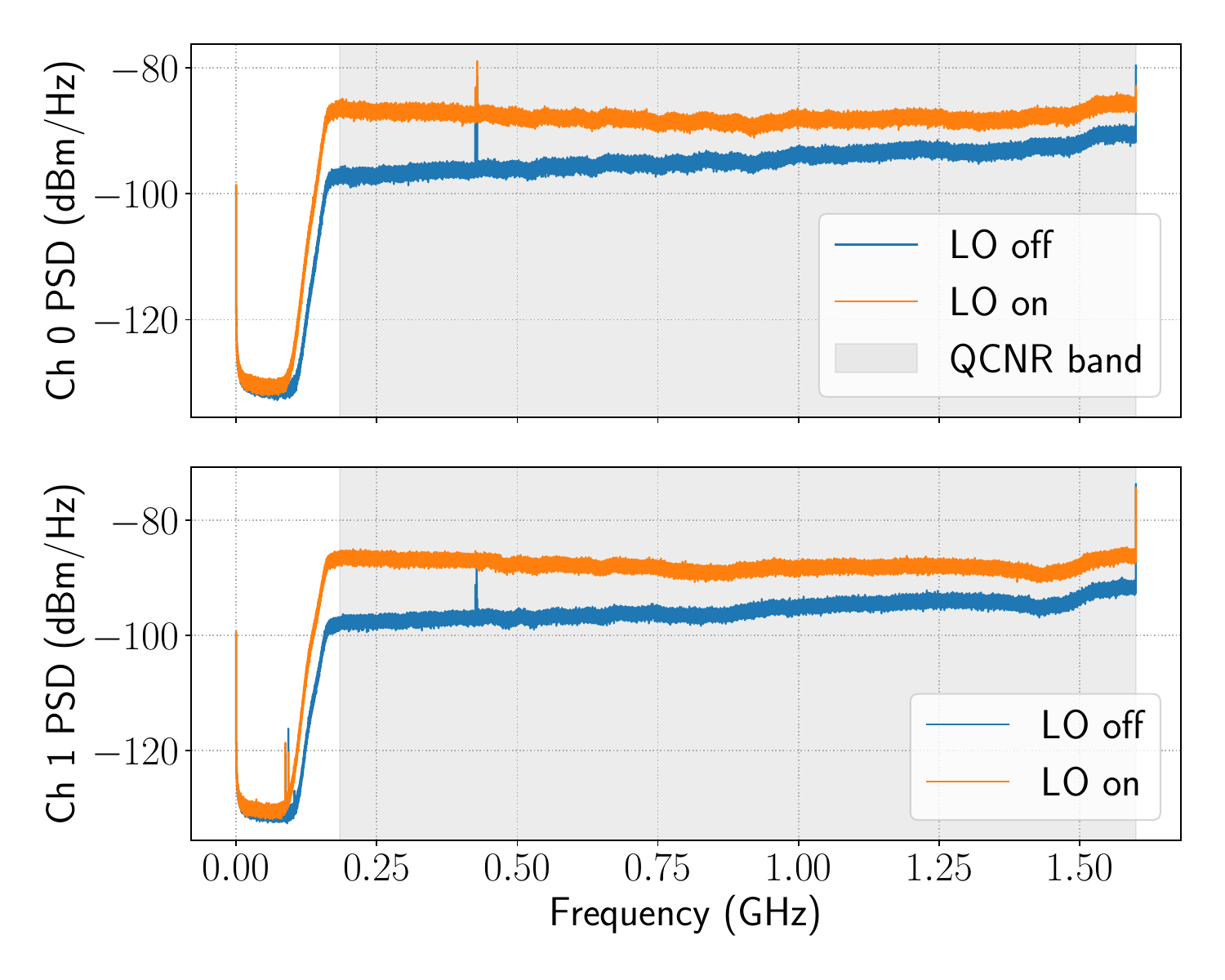}
\caption{PSD of both ADC channels that are connected to the balanced diodes. The blue curve corresponds to no laser power and the orange curve to a local-oscillator power of $8.4\,\mathrm{mW}$. The gray colored area shows over which range the PSD is integrated for QCNR.}
\label{fig:fft}
\end{figure}

PSD estimates were obtained using a one-sided periodogram with a Hann window (equivalent noise bandwidth, $\mathrm{ENBW} = 1.5\,\Delta f$). The fast Fourier transform (FFT) length was set to $N = 2^{20}$, with a sampling rate of $f_s = 3.2\,\mathrm{GS/s}$, resulting in a frequency bin width of $\Delta f = f_s / N$. Each spectrum was averaged over $M = 64$ blocks, and the error is $\pm 1.96/\sqrt{M}$ confidence interval on the dB scale, which is not shown due to being negligibly small.

To further characterize the QRNG, we measure the Husimi $Q$ function and the variance dependence on LO power. With the vacuum-unit axes from Eqs.~\eqref{eq:varfit}–\eqref{eq:deltas-infl}, the 2D histogram $H_{mn}$ over bin area $\delta'_X\delta'_P$ is converted via
\begin{equation}
\widehat{Q}_{mn} := \frac{2\,H_{mn}}{N\,\delta'_X\delta'_P},\qquad 
\sum_{m,n}\widehat{Q}_{mn}\,\frac{\delta'_X\delta'_P}{2}=1.
\end{equation}

\begin{figure}[htbp]
\includegraphics[width=8cm]{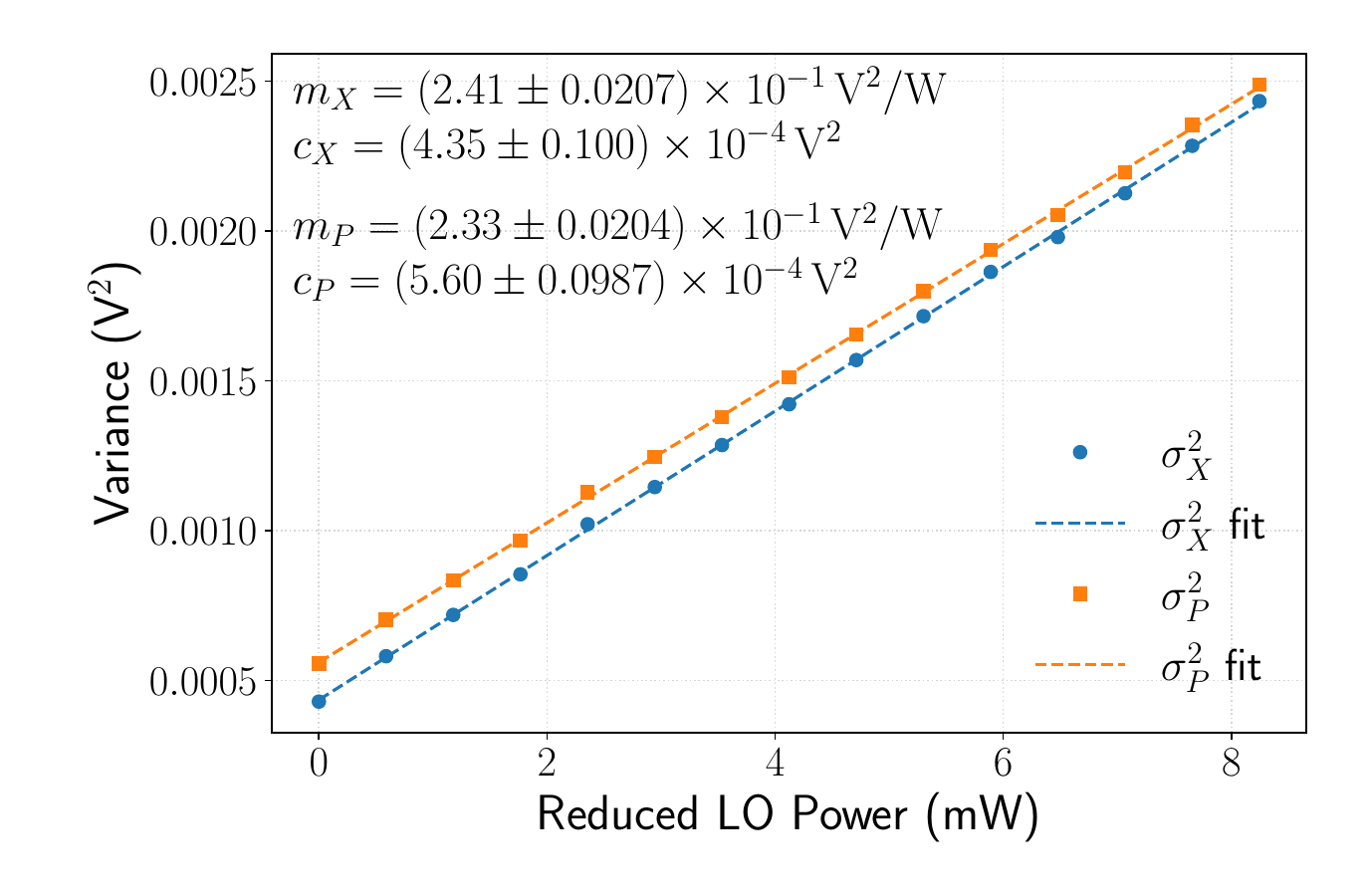}
\caption{The dependence of both quadrature measured signal variance on the local oscillator laser power.}
\label{fig:lo_cal}
\end{figure}

The calibration curve is shown in Fig.~\ref{fig:lo_cal}. The resulting fit values are:
\begin{align*}
m_X=(2.41\pm0.0207){\times}10^{-1}\,\mathrm{V}^2/\mathrm{W}\:\mathrm{(95\%\:CI)},\\
c_X=(4.35\pm0.100){\times}10^{-4}\,\mathrm{V}^2\:\mathrm{(95\%\:CI)},\\
m_P=(2.33\pm0.0204){\times}10^{-1}\,\mathrm{V}^2/\mathrm{W}\:\mathrm{(95\%\:CI)},\\
c_P=(5.60\pm0.0987){\times}10^{-4}\,\mathrm{V}^2\:\mathrm{(95\%\:CI)}
\end{align*}
At the operating LO power $P_{\rm LO}^{(\mathrm{det})}={\,8.4~\mathrm{mW}\,}$ we obtain dimensionless vacuum-unit variances (e.g., $\sigma_X^2\simeq {0.616}$, $\sigma_P^2\simeq {0.647}$) and the conservative resolutions
\begin{equation}
\delta'_X={\,0.0300\,}\quad\text{and}\quad \delta'_P={\,0.0319\,}
\end{equation}
(with uncertainties), computed via Eqs.~\eqref{eq:alphas}–\eqref{eq:deltas-infl}. These resolutions are then used to convert the voltage histogram to the phase-space representation shown in Fig.~\ref{fig:quad}.
From these resolutions,
\begin{equation}
\begin{aligned}
h_{\min}^{(1)}=-\log_2\Bigl(\min\bigl\{\tfrac{\delta'_X\delta'_P}{2\pi},\,1\bigr\}\Bigr)\\
=\log_2\Big(\frac{2\pi}{0.0300\times 0.0319}\Big)\approx 12.681~\text{bits/round}.
\end{aligned}
\end{equation}
For $n_{\rm eff}=k_{\rm in}/b=2880/24=120$ rounds per block, this gives
\begin{equation}
H_{\min}(Z^{n_{\rm eff}}|E)\ge 1521~\text{bits/block}.
\end{equation}

\begin{figure}[htbp]
\includegraphics[width=8cm]{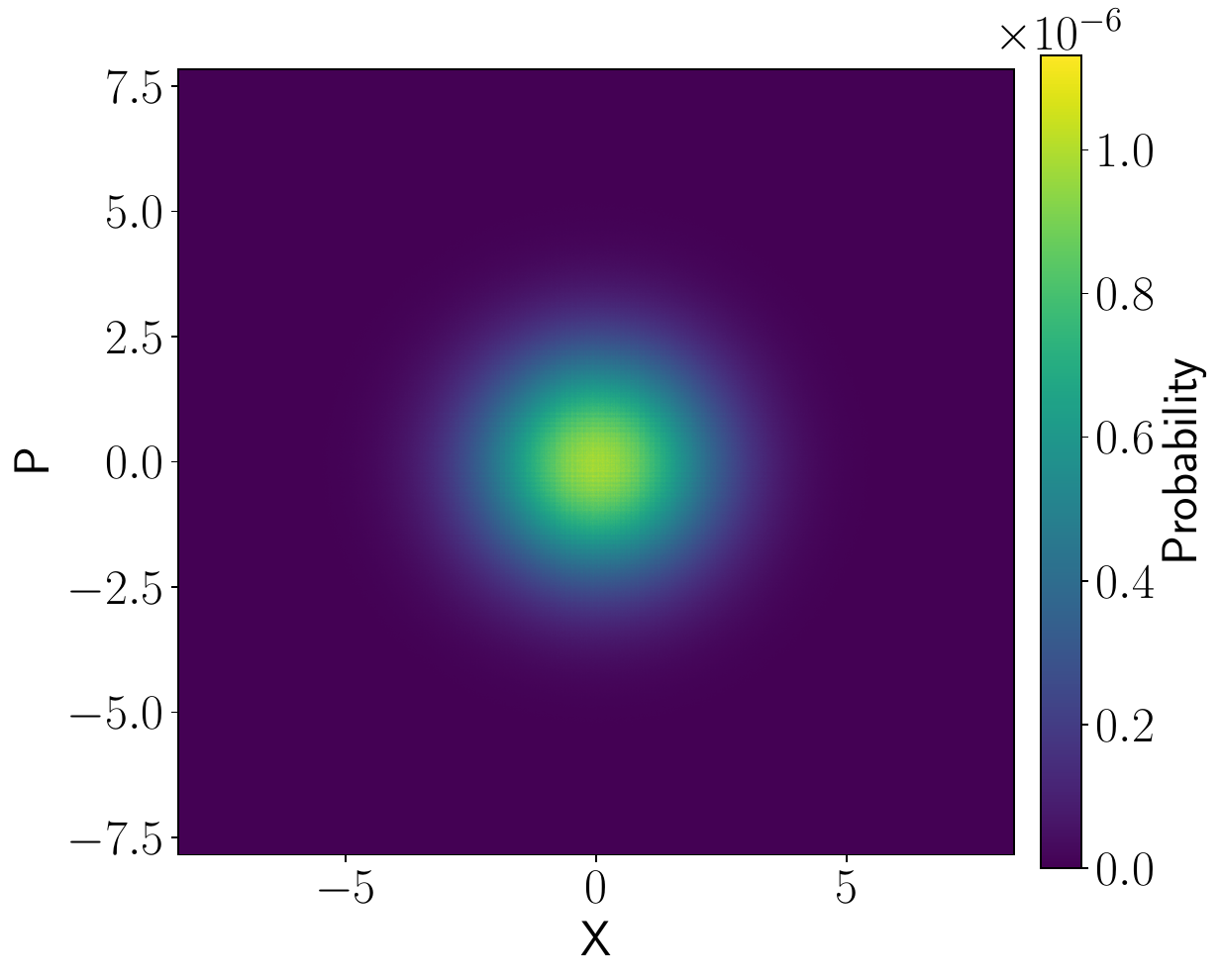}
\caption{Husimi–Q distribution in quadrature space: horizontal axis shows the in-phase quadrature X and vertical axis the orthogonal quadrature P. The color scale encodes the normalized probability per pixel.}
\label{fig:quad}
\end{figure}
We verified that the empirical bin frequencies $f_{mn}$ satisfy
$f_{mn}\le \min\bigl\{\delta'_X\delta'_P/(2\pi),\,1\bigr\} + \Delta$ with high confidence,
using a multiplicative Chernoff bound at significance $\alpha=10^{-6}$
together with a union bound over all bins $(m,n)$; violations signal a calibration error in $(\alpha_{X/P},\delta'_{X/P})$.

Several discrepancies can be seen compared to Ref.~\cite{Avesani_2018}. Firstly, we show an increased variance value in vacuum units. The most likely reason is the difference in set-ups. Ref.~\cite{Avesani_2018} immediately measures the quadrature signals from the diode with an oscilloscope that has a low enough measurement range to capture the low diode voltages. In our case, the lowest voltage that can be measured with the ADC without measurement quality deterioration is 0.5 V peak-to-peak (from -0.25 to 0.25 V), therefore, we have used an amplifier, which may have different noise characteristics, potentially causing an increase in the noise.

The second difference is that in our case the power dependencies of both quadratures do not completely overlap. This is also related to the difference in set-ups. Together with amplifiers we used hardware filters to reduce the complexity of the FPGA implementation. While the components models were the same between the two quadratures, there can still be variations between the same devices in insertion loss, gain, and other characteristics that can cause a reduction in the signal. To increase the overlap, we used the EVOAs before the photodiodes to reduce the power of one of the quadratures.

With the power calibration complete, we obtain $H_{\min}\ge 1521$ per block; we extract $\ell=1272$ bits per block, i.e.\ $(\ell/k_{\rm in})\approx 0.442$, using a $1272\times 2880$ Toeplitz matrix. The next step is to confirm whether the Toeplitz extraction algorithm is flattening the distribution seen in Fig.~\ref{fig:quad}. This can be partially done by looking at the autocorrelation of the numbers before and after extraction, which is shown in Fig.~\ref{fig:corr}.

The net extracted bit rate is
\begin{equation}
R_{\rm net}\;=\;\frac{\ell}{n_{\rm eff}}\,f_s^{(\mathrm{eff})},
\label{eq:R_net}
\end{equation}
where $n_{\rm eff}$ is the number of retained rounds after enforcing factorization (by fixed-ratio decimation or an invertible whitener). For decimation by $D$, $n_{\rm eff}=n/D$ and $f_s^{(\mathrm{eff})}=f_s/D$; for an orthonormal (unit-Jacobian) transform $n_{\rm eff}=n$ and $f_s^{(\mathrm{eff})}=f_s$. Clipped samples are dropped, so $n_{\rm eff}$ already counts valid rounds and no separate $\Delta_{\rm clip}$ term is needed. The measured clipping fraction at the operating point is $f_{\rm clip}=2.3\times10^{-6}$ ($95\%$ CI).

\begin{figure}[htbp]
\includegraphics[width=8cm]{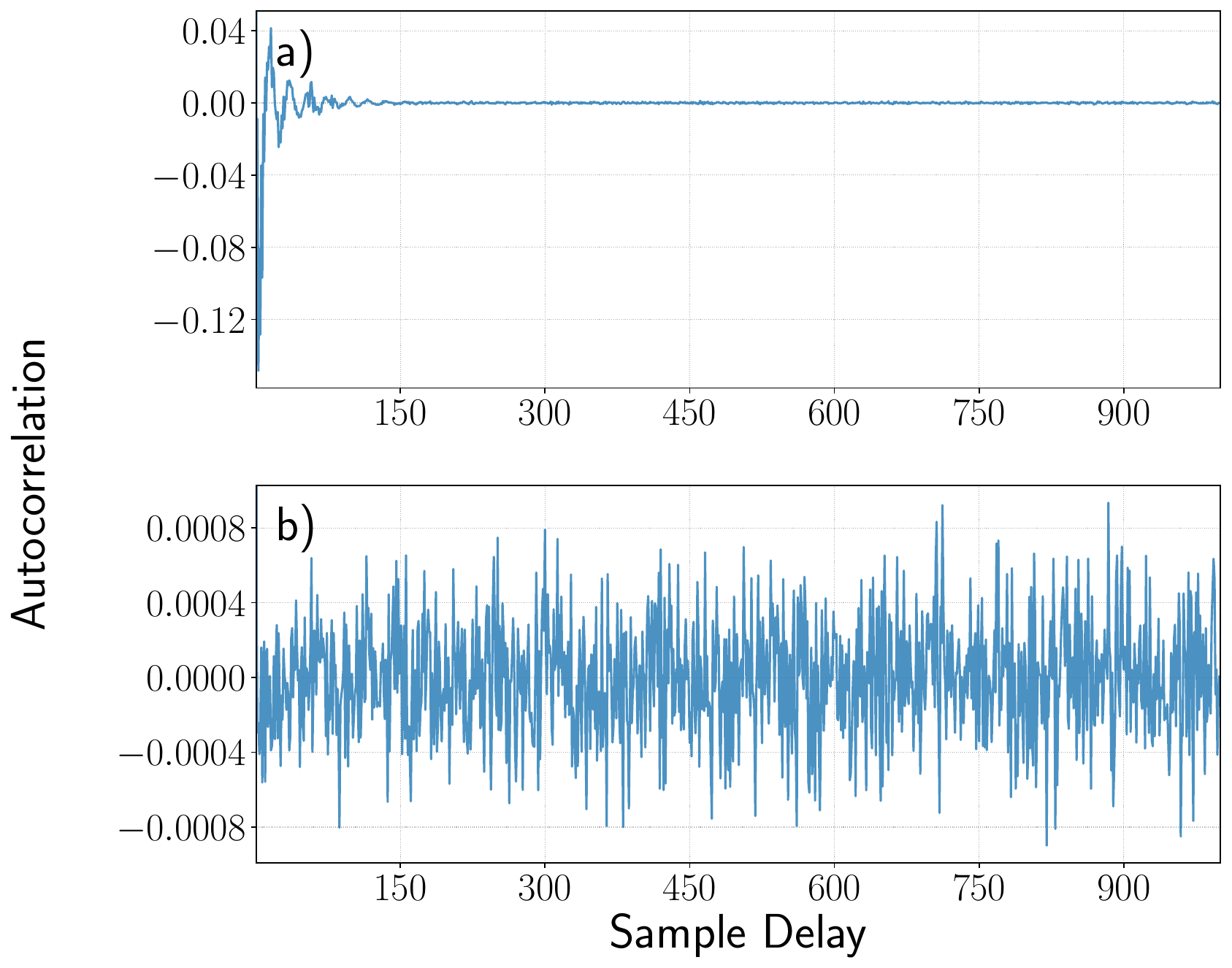}
\caption{Autocorrelation comparison between a) raw Gaussian distribution data and b) random data extracted by Toeplitz extraction algorithm. For both plots the autocorrelation and lag is applied in terms of 1000 12-bit samples. The optical power in both cases is {8.4 mW}. A total of $10^7$ samples were taken for calculation.}
\label{fig:corr}
\end{figure}

There is a clear correlation observed in the random data before extraction, which is also seen in other QRNGs~\cite{Avesani_2018, Zheng_2018, Li_2024}. It is caused partially by noise introduced by the sampling device~\cite{Avesani_2018}, however, the most significant contributor in this case is the sample rate related to the filtering that we apply. Band-limitation induces temporal correlations. We estimate the normalized ACF $\rho(k)$ and define the integrated autocorrelation time
\begin{equation}
\begin{aligned}
\tau_{\mathrm{int}} := \max\Bigl\{1,\; 1+2\sum_{k=1}^{K^\star}\rho(k)\Bigr\},\\
K^\star = \min\{k\ge 1:\,|\rho(k)|< 2/\sqrt{N_{\rm samp}}\}.
\end{aligned}
\end{equation}
In our data we obtain
\begin{equation*}
\tau_{\mathrm{int}}={\,2.68 \pm 0.00011}\:\mathrm{(95\%\:CI)},
\end{equation*}
using $B=10^4$ bootstrap resamples with block length $\ell_B=10^5$ to keep the calculation time reasonable. After extraction, residual autocorrelations are consistent with statistical fluctuations (we do not reduce the security bound by $\tau_{\mathrm{int}}$ since it can be minimized by decimating the sampling frequency).
We additionally inspected higher-order autocorrelations (Ljung–Box) up to a fixed lag window, finding no statistically significant residual structure at the 95\% level.

\begin{figure}[htbp]
\includegraphics[width=8cm]{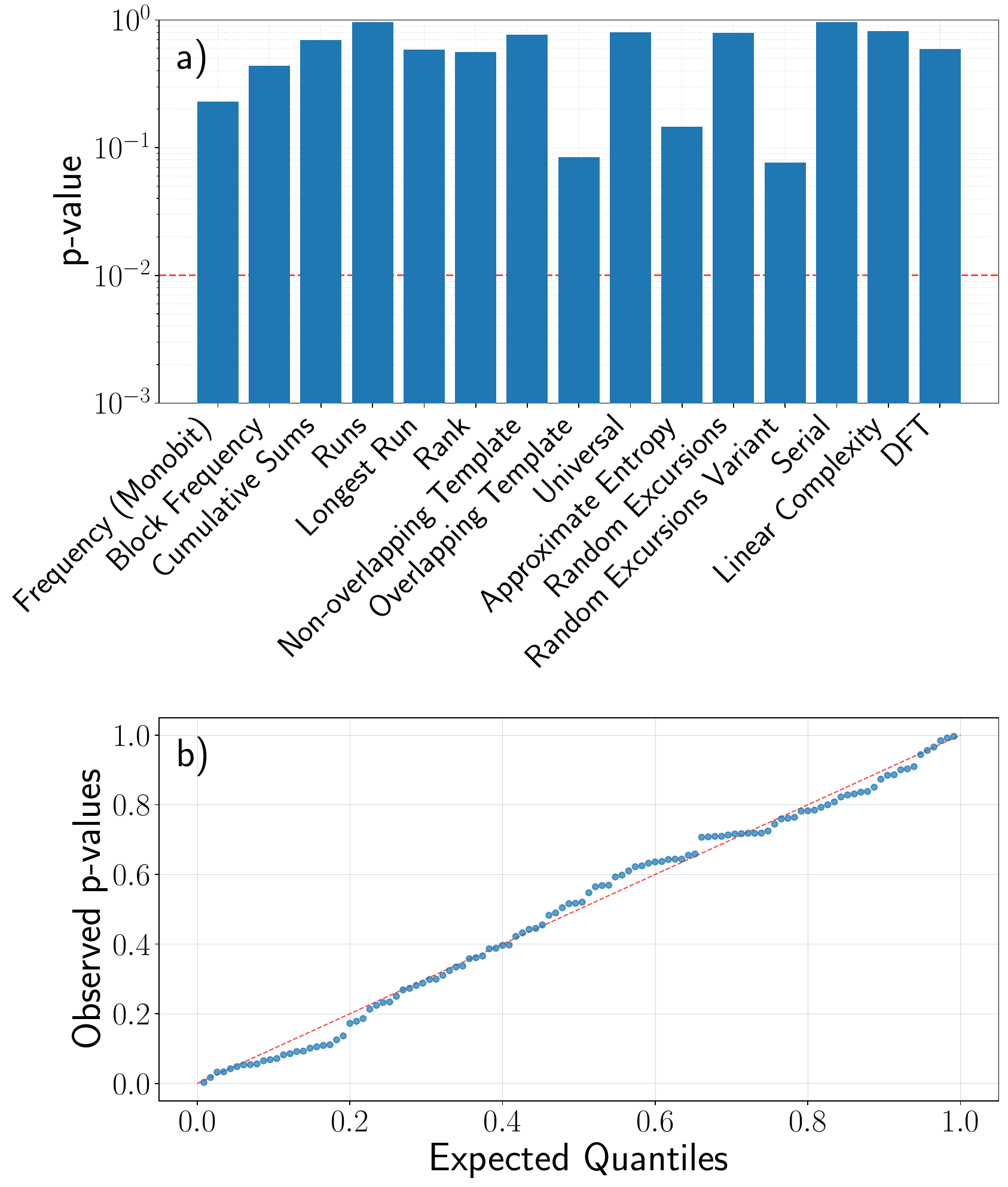}
\caption{Statistical test results for quantum random number generator output. (a) NIST Statistical Test Suite results showing p-values for individual randomness tests on a logarithmic scale. The dashed red line indicates the significance threshold ($\alpha = 0.01$). Tests with single occurrences show individual p-values, while tests with multiple instances show Kolmogorov-Smirnov test p-values comparing the distribution of individual p-values against a uniform distribution. (b) Quantile-quantile plot of Dieharder test suite p-values against expected uniform quantiles. Points following the diagonal reference line (dashed red) indicate p-values consistent with a uniform distribution expected from a true random source.}
\label{fig:tests}
\end{figure}

Lastly, statistical quality of the generated random numbers was evaluated using both the GNU Dieharder suite~\cite{dieharder} and the NIST SP 800-22 Statistical Test Suite~\cite{nist80022}. For Dieharder, all available tests were executed on binary input files of size 16 GB, with the generator set to file input and the analysis configured to apply the Kolmogorov–Smirnov test with two rejections for failure determination and full reporting of p-values. We also sorted the resulting p-values from the Dieharder tests and plotted them on a Q-Q plot. For the NIST tests, we run it configured to do 1000 iterations with 100 p-values per iteration, using sequences of 1,000,000 bits each. The results from the tests are shown in Fig.~\ref{fig:tests}. As can be seen, the NIST tests are all passing and the dieharder test p-values are all uniformly distributed along the diagonal reference line, indicating that no patterns were detected. It should be noted that these tests are only statistical, meaning that they do not show whether there is true randomness. These tests only can tell whether a source is not random.

\flushbottom

\section{Conclusions}
We have presented a fully hardware integrated, source device independent quantum random number generator that uses heterodyne detection of vacuum fluctuations. We realize the random number acquisition and Toeplitz hashing based randomness extraction entirely on FPGA logic. We also do an extensive security analysis, which shows that the source-device independent model that we consider does not rely on a model of the source, but rather on the properties of the measurement itself, which is useful in the case where the source characteristics drift or when they may be adversarially controlled.

We demonstrate a system with a real-time random number generation rate of 33.92 Gbit/s with all the security bounds taken into account. We find that with analog filtering the classical noise does not significantly impact the extractable amount of data, rather the ADC ENOB has the largest impact. Therefore, by using an ADC with higher ENOB, significantly higher QRNG rates could be achieved. We additionally explore additional security limiting factors, such as ADC clipping, inter-quadrature correlation and time autocorrelation. We find that these quantities do not significantly impact the security and therefore we only present them as diagnostics.

Most importantly, this set-up shows that high bandwidth true RNG can be achieved with commonly available components. Moreover, the PCIe interface allows for high speed transfer of the random numbers into another platform, allowing easy access to the QRNG for different applications.

\section*{Acknowledgments}
The authors acknowledge the Bundesministerium für Bildung und Forschung in the frame of the project QR.N (contract no. 16KIS2201) and EPSRC Grant No. EP/S030751/1. A. G. acknowledges funding by the state of North Rhine-Westphalia through the EIN Quantum NRW program. Data is made available upon request.

\bibliography{references} 

\end{document}